\providecommand{\algorithmname}{Algorithm}
\newcommand{\algalign}[2]% #1 = text to left, #2 = text to right
\newcommand{\beq}{\begin{equation}}
\newcommand{\eeq}{\end{equation}}
\def\bw{\mbox{\boldmath $w$}}
\def\bx{\mathbf{x}}
\def\bX{\mbox{\boldmath $X$}}
\def\mx{\mbox{\boldmath $S$}}
\def\bX{\mathbf{X}}
\def\b1{\mbox{\boldmath $1$}}
\def\b0{\mbox{\boldmath $0$}}
\def\mx{\mbox{$\mathbf{x}$}}
\def\mb{\mbox{$\mathbf{b}$}}
\def\mB{\mbox{$\mathbf{B}$}}
\def\mG{\mbox{$\mathbf{G}$}}
\def\mL{\mbox{$\mathbf{L}$}}
\newcommand{\ds}{\displaystyle}
\newtheorem{theorem}{\textbf{Theorem}}
\newenvironment{proof}[1][Proof]{\noindent \textbf{#1.} }{\qedsymbol}
\newcommand{\qedsymbol}{\hspace{\fill}\rule{1.5ex}{1.5ex}}
\providecommand{\algorithmname}{Algorithm}
\DeclareMathOperator*{\argmin}{arg\,min}
\title{TOPOLOGICAL SIGNAL PROCESSING OVER WEIGHTED SIMPLICIAL COMPLEXES}
\name{Claudio Battiloro, Stefania Sardellitti, Sergio Barbarossa, Paolo Di Lorenzo\thanks{This work was supported in part by H2020 EU/Taiwan Project 5G CONNI  Nr. AMD-861459-3 and in part by MIUR under the PRIN Liquid-Edge contract.}}
\address{DIET Department, Sapienza University of Rome, Via Eudossiana 18, 00184, Rome, Italy \\
E-mail:  \{claudio.battiloro, stefania.sardellitti, sergio.barbarossa,  paolo.dilorenzo\}@uniroma1.it 
}
\begin{document}
\ninept
\maketitle
\vspace{-.2cm}
\begin{abstract}
% The abstract should contain about 100 to 150
Weighing the topological domain over which data can be represented and analysed is a key strategy in many signal processing and machine learning applications, enabling the extraction and exploitation of meaningful data features and their (higher order) relationships. Our goal in this paper is to present topological signal processing tools for weighted simplicial complexes. Specifically, relying on the weighted Hodge Laplacian theory, we propose efficient strategies to jointly learn the weights of the complex and the filters for the solenoidal, irrotational and harmonic components of the signals defined over the complex. We numerically asses the effectiveness of the proposed procedures.

\end{abstract}
\begin{keywords}%five
Topological signal processing, weighted simplicial complexes, algebraic topology, metric learning, flow estimation. 
\end{keywords}
\vspace{-.2cm}
\section{Introduction}
\label{sec:intro}
\vspace{-.1cm}
In the last years, there has been a growing interest in the processing of signals defined over topological spaces  \cite{carlsson2009topology}, \cite{munkres2000topology}, i.e. over domains composed of a set of points along with a set of neighborhood relations among them, not necessarily metric. A renowned example are graph signals, usually processed with tools from the Graph Signal Processing (GSP) framework \cite{shuman2013,ortega2018graph}. However, graphs encode only pairwise relationships between data; on the contrary, many real-world phenomena involve multi-way relationships as, e.g., in biological or social networks. Recently, the  Topological Signal Processing (TSP)  over Simplicial Complexes framework has been proposed in \cite{barb_2020,barb_Mag_2020}, and it represents a proper generalization of the GSP  framework for the representation and analysis of signals defined over simplicial complexes; in \cite{SCHAUB2021}, the authors presented  a tutorial on the emerging field of signal processing over hypergraphs and simplicial complexes. A simplicial complex is a topological space composed by a set of elements $\mathcal{V}$ and a set  $\mathcal{S}$  containing subsets of various cardinality of the elements of $\mathcal{V}$ satisfying the inclusion property; the rich algebraic structure of simplicial complexes make them  particularly suited to capture multiway relations among data. Simplicial-based processing methods have been applied in many fields, such as statistical ranking \cite{jiang2011},  tumor progression analysis \cite{roman2015simplicial},  and brain \cite{andjelkovic2020topology} and  biological  \cite{estrada2018centralities} networks. For this reason, there was also a raising interest in the development of (deep) neural network architectures able to handle data defined on simplicial complexes \cite{san,bodnar2021weisfeiler,scn}. Recently, weighted simplicial complexes (WSC) have been investigated as powerful tools to capture the information on data by assigning weights to each simplex in the complex; in this work, we focus on WSCs to develop signal processing procedures for simplex-structured data.

\noindent\textbf{Related Works.} In \cite{bianconi2022weighted}, the authors showed that weighted simplicial complexes provided  with a proper choice of weights can be used to capture higher-order relationships in network data. A non-equilibrium model for weighted growing simplicial complexes is developed in  \cite{bianconi2017weighted}, while in \cite{sharma2017weighted} the authors focused on the problem of group recurrence prediction eploiting  WSCs. Weighted Laplacians for weighted simplicial complexes are introduced in \cite{Wu2018} as a generalization of the combinatorial Hodge Laplacians. In \cite{meng2020weighted}, weighted persistent homology is proposed to analyse  biomolecular data where weights reflect certain physical, chemical or biological properties into the simplicial complex generation. The works in \cite{san,hajiji2022att} proposed attentional deep architectures for simplex-structured data that can be also seen as methods implicitly working on WSCs. 

\noindent\textbf{Contribution.} Our goal in this paper is to  establish the fundamental tools for processing signals over weighted simplicial complexes. Weighing the topological domain over which data are represented and processed enables the extraction of meaningful data features and data relationships encoded by  the weights  assigned  to  simplices of different order. We will show that weighing the simplices corresponds to introducing a metric on the simplicial complex; for this reason, a weighted Hodge Laplacian is introduced by taking into account the resulting  metric tensors. To enhance the advantages of working on WSCs, we propose an edge flow estimation strategy to jointly learn the metric tensor and the edge flow from observed noisy data. Moreover, we propose an efficient strategy to learn the metric from data by minimizing the observed signal total variations, i.e. the circulation of the signals along the triangles. We believe that several other techniques could be developed to give both theoretical and practical contributions.

\vspace{-.2cm}
\section{WEIGHTED SIMPLICIAL COMPLEXES}
\label{sec:wsc}
\vspace{-.1cm}
In this section we  introduce the algebraic representation of  \textit{weighted} simplicial complexes  and the fundamental  tools to analyze signals defined over these topological spaces.\\
\noindent\textbf{simplicial Complexes.} Given a finite set $\mathcal{V}= \{ v_{i}\}_{i=0}^{N-1}$ of $N$ vertices, a {\it $k$-simplex} $\sigma^k_i$ is formed by an unordered set of  $k+1$ vertices in $\mathcal{V}$.
%$\{v_{i_0}, \ldots, v_{i_k}\}$ of $k+1$ points with $0 \le i_j \le N-1$, for $j=0,\ldots, k$, and $v_{i_j}\neq v_{i_n}$ for all $i_j \neq i_n$.
A \textit{face} of the $k$-simplex
is a $(k-1)$-simplex and every $k$-simplex has exactly $k+1$ face. % of the form $\{v_{i_0}, \ldots, v_{i_{j-1}}, v_{i_{j+1}},\ldots, v_{i_k}\}$,
%for some $0\le j \le k$.
An \textit{abstract simplicial complex} ${\cal X}$ is definited as a finite collection of simplices  closed under inclusion of the faces of each simplex, i.e., if ${\cal \sigma}_i \in {\cal X}$, then all faces of $\sigma_i$ also belong to ${\cal X}$.
The order of a simplex is one less than its cardinality and  the order of a simplicial complex is the order of its highest order simplex.
An abstract simplicial complex can be embedded into a metric space. If the complex is embedded into the Euclidean space, a vertex is a 0-dimensional simplex, a line segment  has dimension $1$, a triangle is a simplex of order $2$ and so on. A graph is simply a simplicial complex of order one.
%A graph is a particular case of an abstract simplicial complex of order $1$, containing only simplices of order $0$ (vertices) and $1$ (edges). 
The structure of a simplicial complex is captured by the neighborhood relations of its subsets. Specifically, two simplices of order $k$, $\sigma_i^k, \sigma_j^k \in {\cal X}$, are {\it upper adjacent} in ${\cal X}$, if they are both faces of a simplex of order $k+1$, while they are {\it lower adjacent} in ${\cal X}$, if they share a common face of order $k-1$. We usually focus on  second order simplicial complexes, denoted with $\mathcal{X}=\{\mathcal{V},\mathcal{E},\mathcal{T}\}$ where $\mathcal{V}$, $\mathcal{E}$, $\mathcal{T}$ are the sets of  $0$, $1$ and $2$-simplices, i.e. vertices, edges and triangles, respectively. 

\noindent\textbf{Incidence Matrices.} Let us  denote by $n_k$ the number of simplices of order $k$ in  the complex. 
Given an orientation of all simplices (see \cite{Barbarossa2020TopologicalSP} for details), the structure of a simplicial complex ${\cal X}$ of dimension $K$ is captured by the set of its incidence (or boundary)  matrices $\mathbf{B}_k \in \mathbb{R}^{n_{k-1}\times n_k}$, $k=1, \ldots, K$,  with entries $B_k(i,j)=0$ if $\sigma^{k-1}_i$ is not a face of $\sigma^{k}_j$, and $B_k(i,j)=1$  (or $-1$), if $\sigma^{k-1}_i$ is a face of $\sigma^{k}_j$ and its orientation is  coherent (or not) with the orientation of $\sigma^{k}_j$. We denote the  set of $k$-simplex in $\mathcal{X}$ as  ${\cal D}_{k} := \{\sigma_i^{k}: \sigma_i^{k} \in \mathcal{X}\} $, with $|{\cal D}_{k}| = n_k$ and, obviously, ${\cal D}_{k} \subset {\cal X}_{K}$.

\noindent\textbf{Simplicial Signals.} We are interested in processing signals defined over a simplicial complex. A $k$-simplicial signal $\mathbf{x}_{k}$ is usually defined as a collection of mappings from the set of all $k$-simplices contained in the complex to real numbers:
\begin{equation}\label{signals}
    \mathbf{x}_k = [x_k(\sigma_1^k),\dots,x_k(\sigma_i^k), \dots, x_k(\sigma_{n_k}^k)] \in \mathbb{R}^{n_k},
\end{equation}
where $x_{k}: {\cal D}_{k} \rightarrow \mathbb{R}$. Although the definition in \eqref{signals} is formally correct, it can be reformulated using tools from algebraic topology, that we explicitly need for defining WSCs.  In particular, we need the notions of chains, cochains and metric tensors.

\noindent\textbf{Weighted Simplicial Complexes.} A $k$\textit{-chain} $\tau^k$ is a linear combination of $k$-simplices \cite{grady2010}:
\begin{equation}\label{kchains}
\tau^k=\sum_{j=1}^{n_k}  c_{j} \sigma_j^k, c_j \in \mathbb{R}.
\end{equation}
The space of all the $k$-chains, denoted with $\mathcal{C}_k$, is a real vector space with a basis given by the collection of $k$-simplices, which we refer to  as \textit{basic} $k$-chains.  Being a finite dimensional vector space, the chain space $\mathcal{C}_k$ can be equipped with an inner product completely determined by its basic $k$-chains:
\beq \label{inn_prod_basis}
g_{ij}^{k} := \langle \sigma_i^k, \sigma_j^k\rangle_{\mathcal{C}_k}, 
\eeq
where $g_{ij}^{k} \in \mathbb{R}$, $g_{ij}^{k}=g_{ji}^{k}$, $i,j=1,\ldots, n_k$. In this way, given two chains $\tau^k$ and $\gamma^k$ with coefficients $\{c_j\}_j$ and $\{a_j\}_j$, respectively, we obtain:
\beq \label{inn_prod_chains}
\langle \tau^k, \gamma^k\rangle_{\mathcal{C}_k} = \sum_i\sum_j c_i a_jg_{ij}^{k}. 
\eeq
In this work, we assume orthogonality and positiveness, meaning $g_{ij}^{k}=0$, $\forall\,  i \neq j$, and $g_{ii}^{k}>0$ $\forall\,  i$. We refer to the set of all inner products in \eqref{inn_prod_basis}, for $i, j = 1, \ldots n_k$, as the \textit{metric tensor} of order $k$. The dual space $\mathcal{C}^k$ of $\mathcal{C}_k$ is the space of all linear functional $\tau^{* k}$ from $\mathcal{C}_k$ to $\mathbb{R}$: we refer to these linear functionals as $k$\textit{-cochains}. Due to to the canonical isomorphism, the metric tensor induces an inner product also on the dual space; in particular, given two  cochains $\tau^{* k}$ and $\gamma^{* k}$ with coefficients $\{c^j\}_j$ and $\{a^j\}_j$, respectively, we have:
 \beq \label{inn_prod_cochains}
\langle \tau^{*k}, \gamma^{*k}\rangle_{\mathcal{C}^k} = \sum_i\sum_j c^ia^jw_{ij}^{k},
\eeq
 where $w_{ij}^k=1/g_{ij}^k$. It can be proven that the cochain space $\mathcal{C}^k$ is naturally isomorphic to $\mathbb{R}^{n_k}$ \cite{grady2010}, so that we can identify a cochain $\tau^{*,k}$ with a vector $\mathbf{x}_k = [c^1,\dots,c^{n_k}] \in \mathbb{R}^{n_k}$ containing the coefficients of its corresponding chain. At this point, it is sufficient to set $\mathbf{x}_k(i) = x_k(\sigma_i^k) = c^i$ for re-obtaining the definition in \eqref{signals}; therefore, we can state that simplicial signals and co-chains are the same object (up to an isomorphism). As a direct  consequence of the aforementioned results, we can see the metric tensor  as a  diagonal matrix $\mathbf{G}_k$ with positive entries given by $g_{i i}^{k}$ for $ 1 \leq i\leq n_k$. 
Therefore, given two signals $\mx^{k}_1,\mx^{k}_2$ defined over $k$-simplices, their inner product   is defined as:
\beq \label{eq:scal_prod}
\langle \mx^{k}_1,\mx^{k}_2 \rangle= \mx^{k \, T}_1 \mG^{-1}_k \mx^{k}_2=\sum_{i=1}^{n_k} w_{i i}^{k} \mx^{k}_1(i)\mx^{k}_2(i).
\eeq
We define a \textit{weighted simplicial complex} as a simplicial complex whose chain spaces are equipped with non-trivial metric tensors (non-identity matrices). 

\noindent\textbf{Hodge decomposition.}
To find an algebraic representation of the weighted simplicial complex that is able to capture its  topological and metric structures, we first need  to introduce the  boundaries and coboundaries operators. 
The $k-$boundary operator $\boldsymbol{\partial}_{k}: \mathcal{C}_{k} \rightarrow \mathcal{C}_{k-1}$ is a linear operator  mapping $k$-chains to $(k-1)$-chains, and we denote its dual with $\boldsymbol{\delta}_{k}:=\boldsymbol{\partial}_{k}^*$. The  dual $\boldsymbol{\delta}_{k}$ is called the 
$k-$coboundary operator, and it  maps $(k-1)$-cochains to $k$-cochains. It can be proven that \cite{grady2010}:
\begin{equation}\label{cobound}
\boldsymbol{\delta}_{k}=\mB_k^T.
\end{equation}
We can derive an expression for the adjoint (of the dual) operator $\boldsymbol{\delta}_{k}^{\prime}$ as a function of the metric tensor, observing that it holds that:
\begin{equation}\label{dual_inner}
\langle \mx^{k}, \boldsymbol{\delta}_{k}\mx^{k-1} \rangle=\langle  \boldsymbol{\delta}_{k}^{\prime}\mx^{k}, \mx^{k-1}\rangle
\end{equation}
for every pairs of signals $\mx^{k-1} \in \mathbb{R}^{n_{k-1}}$, $\mx^{k} \in \mathbb{R}^{n_{k}}$. Then,  combining \eqref{cobound} and \eqref{dual_inner}, we can easily write  $\boldsymbol{\delta}_k^{\prime}$ as:
\beq \label{adj_coboundary}
\boldsymbol{\delta}_{k}^{\prime} = \mG_{k-1} \mB_k \mG_{k}^{-1}.\eeq
The topological structure of a (weighted or not) $K$-simplicial complexes is fully described by the higher order Hodge Laplacian matrices of order $k=1, \ldots, K$, defined as:
\beq \label{hodge_top}
\mL_k=\boldsymbol{\delta}_{k}\boldsymbol{\delta}_{k}^{\prime}+\boldsymbol{\delta}_{k+1}^{\prime} \boldsymbol{\delta}_{k+1}. 
\eeq
Specifically, using the expression of the adjoint coboundaries in (\ref{adj_coboundary}), we easily get:
\beq \label{hodge_laps}
\begin{array}{ll}
\mL_0=\mG_0  \mB_1 \mG_{1}^{-1} \mB_1^T \medskip\\
\mL_k= \mB_k^T \mG_{k-1} \mB_k \mG_k^{-1}+ \mG_{k}\mB_{k+1} \mG_{k+1}^{-1} \mB_{k+1}^T, \\
\mL_K= \mB_K^T \mG_{K-1} \mB_K \mG_{K}^{-1},
\end{array}
\eeq
$k\!=\!1,\!\ldots,\!K-1$. Then, for instance, the first-order Laplacian for a simplicial complex of order $2$ can be written as:
\beq
\mL_1= \mB_1^T \mG_{0} \mB_1 \mG_1^{-1}+ \mG_1 \mB_{2} \mG_{2}^{-1} \mB_{2}^T.
\eeq
Note that defining the  lower and upper Laplacians as $\mL_{k,d}=\mB_k^T \mG_{k-1} \mB_k \mG_k^{-1}$
and $\mL_{k,u}=\mG_{k}\mB_{k+1} \mG_{k+1}^{-1} \mB_{k+1}^T$, respectively, it holds that $\mL_{k,d} \mL_{k,u}= \mL_{k,u} \mL_{k,d}=\mathbf{0}$. 
This fact implies that an Hodge  decomposition  \cite{frankel2011geometry, desbrun2006discrete, Bell2008} holds for the signals space $\mathbb{R}^{n_k}$; therefore, it can be decomposed as:
\beq\label{hodge_dec}
\mathbb{R}^{n_k}=\text{im}(\boldsymbol{\delta}_k) \oplus \text{ker}(\mL_k) \oplus \text{im}(\boldsymbol{\delta}_{k+1}^{\prime}).
\eeq
Then, exploiting \eqref{adj_coboundary} and \eqref{hodge_dec}, a higher order signal $\mx^{k} \in \mathbb{R}^{n_k}$ can be decomposed in the sum of three orthogonal components
\beq \label{eq:Hodge_dec}
\mx^{k}= \mB_k^{T}\mx^{k-1}+\mG_{k} \mB_{k+1} \mG_{k+1}^{-1} \mx^{k+1}+ \mx_{\textrm{h}}^k
\eeq
where, for $k=1$, $\mx^{1}_{\textrm{irr}}=\mB_1^{T}\mx^{0}$, $\mx^{1}_{\textrm{sol}}=\mG_{1} \mB_2 \mG_2^{-1} \mx^{2}$
and, $\mx_{\textrm{h}}^1 \in \text{ker}(\mL_1)$ are  the irrotational, solenoidal and harmonic flows, respectively \cite{barb_2020}.  In the sequel, we will focus, w.l.o.g., on edge flow signals and complexes of order 2 composed of $N$ nodes, $E$ edges and $T$ triangles. Therefore, we will drop the subscripts and denote $\mathbf{x}^1$ with $\mathbf{x}$, $\mathbf L_1$ with $\mathbf L$, $\mathbf L_{1,d}$ with $\mathbf L_d$, and $\mathbf{x}^1_h$ with $\mathbf{x}_h$.

\section{Joint learning of edge flows and Weights}
Let us suppose to observe an edge flow signal affected by AWG noise, defined as $\widetilde{\mathbf{x}} = \mathbf{x} + \mathbf{n}$, where $\mathbf{x}$ denotes the clean flow, whereas $\mathbf{n}\overset{i.i.d.}{\sim} \mathcal{N}(0,\sigma^2)$ denotes the noisy flow. In this section, we formulate a denoising problem as a constrained problem, rooted in
the Hodge decomposition, and we propose an efficient strategy for jointly learning the weights (metric tensor) $\mathbf{G}_2$ associated with the $2$-order simplices (triangles) and the flow $\mathbf{x}$. In this first study, we  consider the weighing of the nodes $\mathbf{G}_0$ and of the edges $\mathbf{G}_1$ as given.  
Based on the decomposition in \eqref{eq:Hodge_dec}, we can model the observed flow as:
\begin{align}
\label{flow_model}
    & \mathbf{x} = \mathbf{B}_1^T\mathbf{x}^0+ \mathbf{G}_1\mathbf{B}_2\mathbf{G}_2^{-1}\mathbf{x}^2 + \mathbf{x}_{\textrm{h}}, \nonumber \\ 
    & \widetilde{\mathbf{x}} = \mathbf{x}+ \mathbf{n}.
\end{align}
We formulate the denoising problem as follows:
\begin{align}
\mathcal{Q}) \; \, &(\widehat{\mathbf{x}}^{0},\widehat{\mathbf{x}}^{2},\widehat{\mathbf{x}}_\textrm{h}, \widehat{\mathbf{G}}_2) = \nonumber\\
&\underset{\mathbf{x}^{0},\mathbf{x}^{2},\mathbf{x}_{\textrm{h}},\mathbf{G}_2}{\argmin} \|\mathbf{B}_1^T\mathbf{x}^0 + \mathbf{G}_1\mathbf{B}_2\mathbf{G}_2^{-1}\mathbf{x}^2 + \mathbf{x}_{\textrm{h}} - \widetilde{\mathbf{x}}\|^{2} \nonumber \\
& \qquad \textrm{s.t.} \quad\;\;\, \text{a)} \, \mathbf{L}\mathbf{x}_{\textrm{h}} = \mathbf{0}, \nonumber  \\
& \qquad \qquad  \; \;\;  \text{b)} \,[\mathbf{G}_2^{-1}]_{ii} > 0, \; [\mathbf{G}_2^{-1}]_{ij} = 0, \forall \, i \neq j 
\end{align}
where  the constraint  $\text{a)}$ 
 forces $\widehat{\mathbf{x}}_\textrm{h}$  to belong to the kernel of the Laplacian (the harmonic subspace), while the constraints  $\text{b)}$ impose the diagonal structure to  $\widehat{\mathbf{G}}_2$  with positive entries.  Problem $\mathcal{Q}$ is  not jointly convex, 
 %thus we can compute only local minima solutions. We could solve it via any general purpose numerical algorithm; however, it is trivial to observe that Problem $\mathcal{Q}$ is block multi-convex, 
 but it is block multi-convex, i.e. convex with respect to each of the optimization variables while holding all others fixed. For this reason, we propose an efficient iterative alternating minimization algorithm to find local optimal solutions. Denoting  with $t$ the iteration index, we initialize our iterative algorithm  with a random feasible point  $(\widehat{\mathbf{x}}^0[t],\widehat{\mathbf{x}}^2[t],\widehat{\mathbf{x}}_{\textrm{h}}[t],\widehat{\mathbf{G}}_2[t])$ at time $t=0$.  Then, defining the point $\hat{\mathbf{z}}[t]:=(\widehat{\mathbf{x}}^0[t],\widehat{\mathbf{x}}^2[t],\widehat{\mathbf{x}}_{\textrm{h}}[t])$, the proposed alternating optimization method consists in solving at each iteration $t$ the two following convex problems:
\begin{align}\label{sig_prob}
 \mathcal{Q}_1) \;\hat{\mathbf{z}}[t] = 
& \!\!\!\underset{\mathbf{z}=(\mathbf{x}^0,\mathbf{x}^2,\mathbf{x}_{\textrm{h}})}{\argmin} \!\|\mathbf{B}_1^T\mathbf{x}^0 + \mathbf{G}_1\mathbf{B}_2\widehat{\mathbf{G}}_2^{-1}[t-1]\mathbf{x}^2 + \mathbf{x}_{\textrm{h}} - \widetilde{\mathbf{x}}\|^{2} \nonumber \\
& \qquad \textrm{s.t.}  \quad\;\;\, \mathbf{L}[t-1]\mathbf{x}_{\textrm{h}} = \mathbf{0},
\end{align}
where $\mathbf{L}[t-1]=\mathbf{L}_d+\mathbf{G}_1\mathbf{B}_2 \widehat{\mathbf{G}}_2^{-1}[t-1]\mathbf{B}_2^T$, and
\begin{align}\label{metric_prob}
\!\!\! \mathcal{Q}_2) \;\widehat{\mathbf{G}}_2[t] =  
& \underset{\mathbf{G}_2}{\argmin} \|\mathbf{B}_1^T\widehat{\mathbf{x}}^0[t]+ \mathbf{G}_1\mathbf{B}_2\mathbf{G}_2^{-1}\widehat{\mathbf{x}}^2[t] + \widehat{\mathbf{x}}_\textrm{h}[t] - \widetilde{\mathbf{x}}\|^{2} \nonumber \\
& \qquad \textrm{s.t.} \; \;(\mathbf{L}_d+ \mathbf{G}_1\mathbf{B}_2\mathbf{G}_2^{-1}\mathbf{B}_2^T)\widehat{\mathbf{x}}_{\textrm{h}}[t] = \mathbf{0}, \; \nonumber \\& \qquad \qquad [\mathbf{G}_2^{-1}]_{ii} > 0, \; [\mathbf{G}_2^{-1}]_{ij} = 0, \forall i\neq j.
\end{align}
 Problems $\mathcal{Q}_1$ and $\mathcal{Q}_2$ are convex, and  can be efficiently solved with any numerical solver. Furthermore, using similar derivations as in  \cite{barb_2020}, we can easily prove that problem $\mathcal{Q}_1$ admits the following closed form solution:
\begin{align}
    & \widehat{\mathbf{x}}^0[t] = \mathbf{L}_0^{\dag}\mathbf{B}_1\widetilde{\mathbf{x}}, \forall \, t \label{0_sol}\\
    & \widehat{\mathbf{x}}^2[t] =  (\widehat{\mathbf{G}}_2^{-1}[t-1]\mathbf{B}_2^T \mathbf{G}_1 \mathbf{B}_2\widehat{\mathbf{G}}_2^{-1}[t-1])^{\dag}\widehat{\mathbf{G}}_2[t-1]^{-1}\mathbf{B}_2^T\widetilde{\mathbf{x}}, \label{1_sol}  \\
    &\widehat{\mathbf{x}}_\textrm{h}[t] = \widetilde{\mathbf{x}} - \mathbf{B}_1^T\widehat{\mathbf{x}}^0[t] - \mathbf{G}_1\mathbf{B}_2\mathbf{G}_2^{-1}[t-1]\widehat{\mathbf{x}}^2[t],\label{h_sol}
\end{align}
where $\mathbf{L}_0^{\dag}$ is the Moore-Penrose pseudoinverse.
Problem $\mathcal{Q}$ can be also regularized (with convex penalties), leading to the same procedure but with the additional regularization terms in the objective functions. The proposed procedure is listed in Algorithm 1.\\
\begin{algorithm}[t]
   \caption{: EDGE FLOW ESTIMATION }    \hspace*{\algorithmicindent} \textbf{Inputs}: \\
    \hspace*{\algorithmicindent} \quad $\widetilde{\mathbf{x}} \in \mathbb{R}^{E}$: Noisy edge flow signal. \\
    \hspace*{\algorithmicindent} \quad $\mathbf{B}_1 \in \mathbb{R}^{N\times E}$: Nodes to edges incidence matrix \\
    \hspace*{\algorithmicindent} \quad $\mathbf{B}_2 \in \mathbb{R}^{E\times T}$: Edges to triangles incidence matrix \\
    \hspace*{\algorithmicindent} \quad $\mathbf{G}_1 \in \mathbb{R}^{E\times E}$: Edges weights (metric tensor) \\
    \hspace*{\algorithmicindent} \quad $\widehat{\mathbf{x}}^0[0]$, $\widehat{\mathbf{x}}^2[0]$, $\widehat{\mathbf{x}}_{\textrm{h}}[0]$, $\widehat{\mathbf{G}}_2[0]$: Estimates initializations \\
    \hspace*{\algorithmicindent} \quad $N_{t}$: number of iterations (can be replaced by stopping criterion) \\
    \hspace*{\algorithmicindent} \textbf{Outputs}: \\
    \hspace*{\algorithmicindent} \quad $\widehat{\mathbf{x}}^0$, $\widehat{\mathbf{x}}^1$, $\widehat{\mathbf{x}}_{\textrm{h}}$, $\widehat{\mathbf{G}}_{2}$: Learned signals and weights (metric tensor)
    \begin{algorithmic}[1]
        \Function{Edge flow estimation \,} {\textbf{Inputs}}
                \For{$t \in [1, N_t]$}
                    \State $\widehat{\mathbf{x}}^0[t]$, $\widehat{\mathbf{x}}^2[t]$, $\widehat{\mathbf{x}}_{\textrm{h}}[t]$: Compute \eqref{0_sol}, \eqref{1_sol}, and \eqref{h_sol}
                    \State $\widehat{\mathbf{G}}_2[t]$: Numerically solve $\mathcal{Q}_2$
                \EndFor
        \Return: \\
        \hspace*{\algorithmicindent} \quad $\widehat{\mathbf{x}}^0=\widehat{\mathbf{x}}^0[N_t]$ \\ 
\hspace{\algorithmicindent} \quad $\widehat{\mathbf{x}}^2=\widehat{\mathbf{x}}^2[N_t]$ \\
\hspace{\algorithmicindent} \quad $\widehat{\mathbf{x}}_{\textrm{h}} = \widehat{\mathbf{x}}_{\textrm{h}}[N_t]$ \\
\hspace{\algorithmicindent} \quad $\widehat{\mathbf{G}}_2=\widehat{\mathbf{G}}_2[N_t]$
       \EndFunction
\end{algorithmic}
\end{algorithm}\label{algo:proj}
To numerically test the effectiveness of the proposed edge flow estimation strategy, we consider a random simplicial complex with $N=40$ nodes, $E=137$ edges, $T=96$ triangles, and with the metric tensors $\mathbf{G}_1$ and $\mathbf{G}_2$ being random positive diagonal matrices. We generate  random noisy edge signals $\widetilde{\mathbf{x}}$  according to the model in \eqref{flow_model} with  $\mathbf{x}^0,\mathbf{x}^2,\mathbf{x}_{\textrm{h}}$ being random sparse vectors. Then, we apply the proposed alternating minimization scheme to estimate $\mathbf{x}$ from $\widetilde{\mathbf{x}}$. For this experiment, we also regularize Problem $\mathcal{Q}$ with a $l_1$ penalty on the signal components. In Figure \ref{nmse_flow}, we show the  correlation coefficient $\rho = \frac{|\widehat{\mathbf{x}}^T\mathbf{x}|}{\|\widehat{\mathbf{x}}\|\|\mathbf{x}\|}$ (to neglect the effect of multiplicative constants) versus the noise standard deviation $\sigma$, comparing our method against the estimation only of the signals components assuming a flat metric (unitary weights) $\mathbf{G}_2 = \mathbf{I}$; the results are averaged over $20$ signals and noise realizations. As the reader can notice, the joint learning of the metric tensor and the signals components show a significant performance gain. We plan to extend this work by designing more complex procedures involving the learning also of the weights $\mathbf{G}_0$ and $\mathbf{G}_1$, as well as testing on real data.
\begin{figure}[t]
\centering
\includegraphics[width=0.48\textwidth]{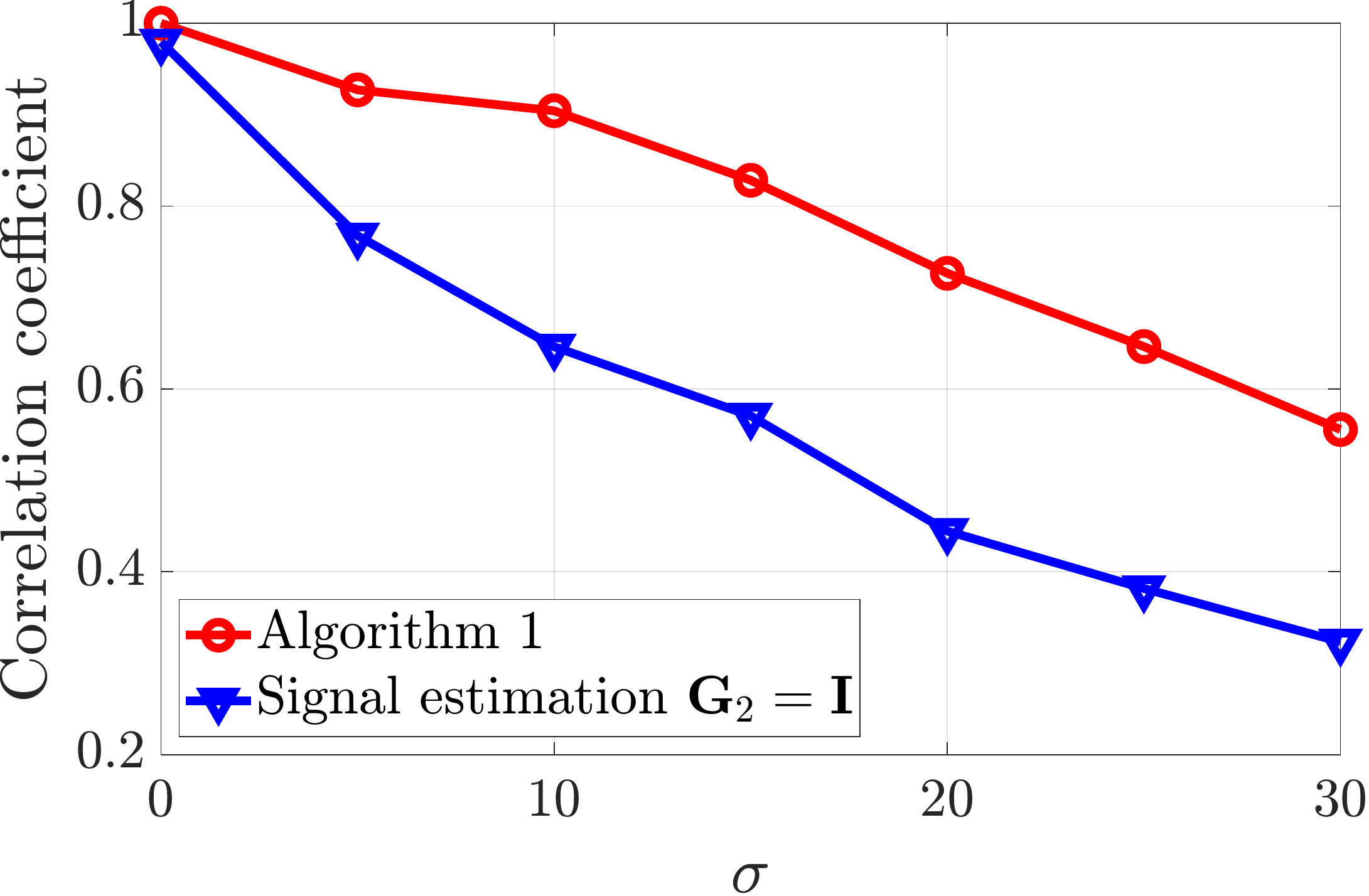}
\caption{Correlation coefficient vs the noise standard deviation.}
%Normalized Mean Square Error $\textrm{NMSE}$ vs iteration index $t$.} 
\label{nmse_flow}
\end{figure} 
\section{Learning the metric tensor from data}
In this section we propose an efficient strategy to learn the metric tensor $\mathbf{G}_{2}$ from a set of  observed edge signals.  Assuming that the  edge flows are smooth over the solenoidal subspace, so that their circulation along the triangles  is minimum, we formulate the learning of the  metric tensor  as a total variation minimization problem.  We start from the observation of  $M$ snapshots of edge signals  $\bx(m)$, $m=1,\ldots,M$. The squared norm of the circulation of each signal along the triangles of the complex can be written as 
\beq \label{eq:TV_m}
\begin{split}
\mbox{TV}_{\textrm{sol}}(\bx(m))& =\parallel \mathbf{G}_{2}^{-1/2} \mB_2^{T} \bx(m) \parallel^{2}\\ &=\bx_{\textrm{sol}}(m)^{T} \mB_2 \mathbf{G}_{2}^{-1} \mB_2^{T} \bx_{\textrm{sol}}(m)
\end{split}
\eeq
where in  the last equality we exploited the orthogonality among the irrotational, harmonic and solenoidal subspaces. Note that since the  metric tensor is a diagonal matrix, equation (\ref{eq:TV_m})
can be expressed in the form:
\beq \label{eq:TV_m1}
\mbox{TV}_{\textrm{sol}}(\bx(m))= \sum_{i=1}^{T} w_2(i) \bx_{\textrm{sol}}(m)^{T} \mb_i\mb_{i}^{T} \bx_{\textrm{sol}}(m)
\eeq
where $w_2(i)$ is the $i$-th positive diagonal entry of the metric tensor  $\mathbf{G}_{2}^{-1}$, $T$ is the number of (filled) triangles of the complex and  $\mb_i$ is the $i$-th column of $\mB_2$.
Then, our goal is to find the optimal weights'  vector $\bw_2=[w_2(1), \ldots,w_2(T)]^{T}$  minimizing the total variation of the observed solenoidal signals. Therefore, denoting with $\bx_{\textrm{sol}}=[\bx_{\textrm{sol}}(1),\ldots, \bx_{\textrm{sol}}(M)]$ the $E\times M$ matrix with columns the observed signals, the metric learning problem can be formulated as 
\beq \label{eq:TV_min}
\begin{array}{lll}
 \underset{\mathbf{w}_2\in \mathbb{R}^T}{\text{min}} &  \ds \sum_{i=1}^{T} w_2^{2}(i) \text{tr}(\bX^{T}_{s} \mb_i \mb_i^{T} \bX_{s})\qquad \qquad \qquad (\mathcal{P})\\
 \; \; \text{s.t.} & \text{a)}\ds \sum_{i=1}^{T} w_2(i)=1,\\
 \; \; \quad &  \text{b)} \;  w_2(i) > 0, \quad \forall \, i 
 \end{array}
\eeq
where we  consider a quadratic objective function instead of the linear one leading in our problem  to a trivial solution. The constraint $\text{a)}$ forces  the sum of the positive weights to be a constant value while the constraints in $\text{b)}$ ensure positive variables $w_2(i)$.
To simplify our notation, let us introduce the positive coefficients $a_i=\text{tr}(\bX^{1 \, T}_{s} \mb_i \mb_i^{T} \bx_{\textrm{sol}})$. 
The optimization problem $\mathcal{P}$ admits a closed form solution as stated in the following theorem.
\begin{theorem}
\textit{Given any set of positive coefficients $\{ a_i\}_{i=1}^{T}$, the convex optimization problem $\mathcal{P}$ admits the closed form solution: \beq w_{2}(i)=\frac{\lambda^{\star}}{2 a_i} \eeq with $\lambda^{\star}=1/(\sum_{i=1}^{T}1/(2a_i))$.}
\end{theorem}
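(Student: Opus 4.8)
The plan is to treat $\mathcal{P}$ as a strictly convex quadratic program over the relative interior of the probability simplex and to solve it through its Karush--Kuhn--Tucker (KKT) conditions, which here are both necessary and sufficient for global optimality. First I would rewrite the objective compactly as $f(\mathbf{w}_2)=\sum_{i=1}^{T} a_i\, w_2^2(i)$ using the coefficients $a_i=\text{tr}(\bX_s^{T}\mb_i\mb_i^{T}\bX_s)$ introduced before the statement, and note that each $a_i$ is a sum of squares and hence strictly positive under the hypothesis. Consequently the Hessian of $f$ is the diagonal matrix $\text{diag}(2a_1,\dots,2a_T)$, which is positive definite, so $f$ is strictly convex and $\mathcal{P}$ admits at most one minimizer.

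The key step is the Lagrangian analysis of the equality-constrained relaxation. I would attach a single multiplier $\lambda$ to constraint a) and temporarily drop the positivity constraints b), forming $\mathcal{L}(\mathbf{w}_2,\lambda)=\sum_{i=1}^{T} a_i w_2^2(i)-\lambda\big(\sum_{i=1}^{T} w_2(i)-1\big)$. Stationarity, $\partial\mathcal{L}/\partial w_2(i)=2a_i w_2(i)-\lambda=0$, immediately yields $w_2(i)=\lambda/(2a_i)$ for every $i$. Substituting this expression into the equality constraint $\sum_i w_2(i)=1$ gives $\lambda\sum_{i=1}^{T}1/(2a_i)=1$, hence $\lambda^{\star}=1/\big(\sum_{i=1}^{T}1/(2a_i)\big)$, and back-substitution reproduces the claimed closed form $w_2(i)=\lambda^{\star}/(2a_i)$.

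Finally, I would close the argument by checking feasibility with respect to the constraints that were dropped: since $\lambda^{\star}>0$ and every $a_i>0$, each candidate weight $w_2(i)=\lambda^{\star}/(2a_i)$ is strictly positive, so constraints b) are automatically satisfied and were correctly treated as inactive. Combined with the strict convexity established above, this certifies that the computed KKT point is the unique global minimizer of $\mathcal{P}$.

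I do not anticipate a serious obstacle: the computation is a textbook weighting-by-inverse-cost argument. The only point that warrants care is the handling of the strict inequalities b), for which the feasible set is relatively open; rather than invoking complementary slackness on an open region, I would simply solve the equality-constrained problem and verify a posteriori that its minimizer lies in the open positive orthant, which---together with strict convexity guaranteeing that the relaxed optimum coincides with the constrained one---completes the proof.
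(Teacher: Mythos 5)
Your proof is correct and follows essentially the same route as the paper: a Lagrangian/KKT analysis yielding the stationarity condition $2a_i w_2(i)=\lambda$, substitution into the sum constraint to get $\lambda^{\star}$, and the observation that the positivity constraints are inactive. Your variant of dropping constraints b) and verifying strict positivity a posteriori is in fact slightly cleaner than the paper's use of complementary slackness on strict inequalities, but it is the same argument in substance.
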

\begin{proof}
First let us observe that the objective function in problem $\mathcal{P}$ is a convex function being a  linear combination with positive coefficients of quadratic variables. Then, problem $\mathcal{P}$ is a convex optimization problem  since the linear constraint $\text{a)}$ and the  constraints in $\text{b)}$ define a feasible convex set.
Therefore, any optimal solution   $\bw^{\star}_2$  satisfies the KKT conditions of   $\mathcal{P}$ that are necessary and sufficient conditions for optimality (note that Slater's constraint qualification is satisfied). Then, denoting with $\mathcal{L}(\bw,\lambda,\boldsymbol{\mu})$ the Lagrangian function of $\mathcal{P}$, the KKT conditions are
\beq
\begin{array}{ll}
  \text{(i)} \; \ds \frac{\partial \mathcal{L}}{\partial w_2(i)}= 2 w_2(i) a_i-\lambda -\mu_i=0, \quad   \forall \, i & \medskip\\
  \text{(ii)} \;  \mu_i w_2(i)=0,  \;\; \mu_i\geq 0,  w_2(i)>0, \;\; \forall \; i & \medskip\\
   \text{(iii)}\;  \lambda(\sum_i w_2(i)-1 )=0, \;\;  \lambda \in \mathbb{R}, \;\; \sum_{i=1}^{T} w_2(i)=1. &
\end{array}
\eeq
Since $w_2(i)>0$, from $\text{(ii)}$ we get $\mu_i=0$, so that condition $\text{(i)}$ becomes $w_2(i)=\frac{\lambda}{2 a_i}$. Replacing these variables in the linear constraint, one gets $\lambda^{\star}=1/(\sum_i 1/(2 a_i))$ so that the optimal solutions are $w_2(i)=\frac{\lambda^{\star}}{2 a_i}$. 
\end{proof}\\
To numerically test the effectiveness of the proposed metric learning strategy we solve problem $\mathcal{P}$ to find the metric tensor from the observation of signals  over the edges of the simplicial complex. Specifically, we generated $100$   random geometric graphs composed of $N=40$ vertices by filling all the possible $2$-simplices in the graph. For each graph we  generated a metric tensor with diagonal random entries between $[0,1]$ such that $\bw$ belongs to the feasible set of $\mathcal{P}$. In Figure 
\ref{fig:metr},  we report the mean squared metric estimation error $\parallel {\bw}_2-\hat{\bw}_2 \parallel_F$ versus the number $M$ of observed edge signals. The results are averaged over $100$ simplicial complex realizations and, for each complex, by generating $100$ random matrix  $\bX_s$ of bandlimited  signals. We can observe that as the number $M$ of signals increases a more accurate estimation $\hat{\bw}_2$ of the true metric is provided.     \begin{figure}[t!]
\centering
\includegraphics[width=0.48\textwidth]{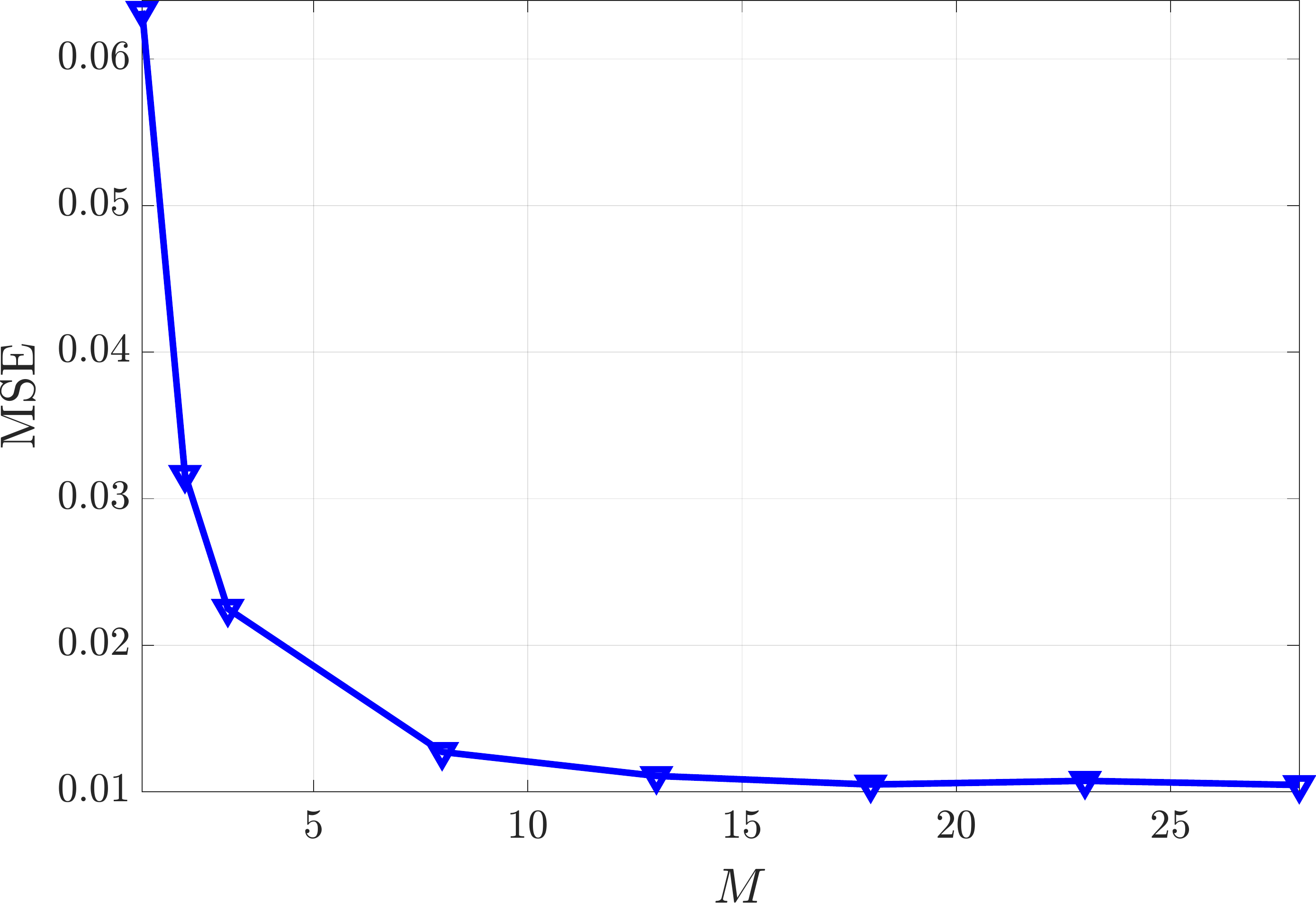}
\caption{ MSE versus the number of observed signals $M$.} \label{fig:metr}
\end{figure} 
\begin{section}{Comments and Conclusions}
     We presented topological signal processing tools for elaborating signals defined over weighted simplicial complexes. Specifically, relying on the weighted Hodge Laplacian theory, we proposed efficient strategies to jointly learn the weights of the complex and the filters for the solenoidal, irrotational and harmonic components of the signals defined over the complex. We numerically assessed the effectiveness of the proposed procedures. This is a preliminary work with two main goals: the first one is casting the algebraic topology notions of abstract simplicial complexes with non trivial metric tensors in a signal processing framework; the second one is proposing signal processing tools able to handle simplex-structured data defined on weighted simplicial complexes. We plan to extend the proposed framework both from a theoretical and an applied points of view.
\end{section}

%\subsection{Subheadings}
%\label{ssec:subhead}

%\subsubsection{Sub-subheadings}
%\label{sssec:subsubhead}

% To start a new column (but not a new page) and help balance the last-page
% column length use \vfill\pagebreak.
% -------------------------------------------------------------------------
%\vfill
%\pagebreak

% References should be produced using the bibtex program from suitable
% BiBTeX files (here: strings, refs, manuals). The IEEEbib.bst bibliography
% style file from IEEE produces unsorted bibliography list.
% -------------------------------------------------------------------------
\vfill\pagebreak
\newpage

\bibliographystyle{IEEEbib}
%\bibliography{strings,refs}

\bibliography{reference}
\end{document}